\def\algbackskip{\hskip-\ALG@thistlm}
\newtheorem{theorem}{Theorem}[section]
\newtheorem{corollary}{Corollary}[theorem]
\newtheorem{lemma}[theorem]{Lemma}
\newtheorem{proposition}[theorem]{Proposition}
\theoremstyle{definition}
\newtheorem{definition}[theorem]{Definition}
\newcommand{\fO}{f_{\restriction_\Omega}}
\newcommand{\fT}{f_{\restriction_\Theta}}
\newcommand{\B}{\mathbb{B}}
\title{Control Strategy Identification via Trap Spaces \\ in Boolean Networks}
\author[1,2]{Laura Cifuentes Fontanals} 
\author[1]{Elisa Tonello} 
\author[1]{Heike Siebert}
\affil[1]{Freie Universität Berlin, Germany}
\affil[2]{Max Planck Institute for Molecular Genetics, Berlin, Germany}
\date{}
\begin{document}

\maketitle

\begin{abstract}
The control of biological systems presents interesting applications such as cell reprogramming or drug target identification. A common type of control strategy consists in a set of interventions that, by fixing the values of some variables, force the system to evolve to a desired state. This work presents a new approach for finding control strategies in biological systems modeled by Boolean networks. In this context, we explore the properties of trap spaces, subspaces of the state space which the dynamics cannot leave. Trap spaces for biological networks can often be efficiently computed, and provide useful approximations of attraction basins. Our approach provides control strategies for a target phenotype that are based on interventions that allow the control to be eventually released. Moreover, our method can incorporate information about the attractors to find new control strategies that would escape usual percolation-based methods. We show the applicability of our approach to two cell fate decision models.
\end{abstract}

\section{Introduction}

The control of biological systems presents interesting applications such as cell fate reprogramming, drug target identification for disease treatments or stem cells programming \cite{DrugDiscovery,Reprogramming}. Controlling a cell fate decision network could for instance allow, in the case of cancer cells, to lead the system to an apoptotic state and, therefore, evolve towards the elimination of pathological cells \cite{ApoptosisCancer}. Finding adequate candidates for control is a complex problem, in particular since the experimental testing of all the possibilities is not feasible. Mathematical modeling can help address this problem by enabling \textit{in silico} identification of possible effective candidates.

Modeling of biological processes is often challenged by the lack of information about kinetic parameters or specific reaction mechanisms. The Boolean formalism aims at capturing the qualitative behavior of systems via a coarse representation of the relationship between the species of interest. Mechanisms underlying activation and inhibition processes are summarized in logical functions, allowing for two activity levels for each variable. The two values can represent for example if a gene is expressed or not, or if the concentration of a protein is above or below a certain threshold. Boolean modeling has in many instances been shown to capture the fundamental behaviors and dynamics of biological systems and has been widely used to make predictions or design strategies for therapeutic interventions \cite{CellFate_network,Sinergies,MAPK_network}.

Control of biological systems is a broad field that encompasses a variety of approaches and goals. Attractor control aims at leading the system to a desired attractor, starting from a particular initial state (source-target control) \cite{ControlBasins} or from all possible initial states (full-network control) \cite{ControlMotifs}. However, it is often useful to induce a desired phenotype rather than a specific attractor. Phenotypes are usually defined in terms of some biomarkers i.e., observable and measurable components that represent the main characteristics of biological processes. The approach that focuses control on a set of relevant variables is also known as target control \cite{InterventionSets,TargetControl}. In this work, we are interested in full-network control for a target phenotype.

There are different approaches for system interventions, that is, the way the control is applied to biological systems. In the context of Boolean modeling, we consider as interventions the perturbations or modifications that fix the value of some components (node control) \cite{ControlBasins,ControlMotifs}. In the example of a gene regulatory network, fixing a variable to a certain value can be understood as the knockout or permanent activation of a gene. Among other approaches to Boolean network control is edge control, which targets the interactions between variables \cite{ControlBCN,ControlAlgebra}. For a gene regulatory network, edge control can be interpreted for instance as the modification of a protein to alter its interaction with a certain gene.

Control of dynamical systems has been a popular research field in systems biology in the last years, also in the Boolean setting. Many approaches focus on the structure and topology of the network, for example by looking at feedback loops \cite{FeedbackVertexSet} or stable motifs \cite{ControlMotifs}, and several studies discuss the complexity and characteristics of such problems \cite{KernelControl,Controllability_Networks}. Other approaches include techniques based on topological information to reduce the size of the search space \cite{InterventionSets} or computational algebra methods \cite{ControlAlgebra}. Recent works have explored attractor control through the characterization of basins of attraction, that is, sets of states from which only a certain attractor can be reached \cite{ControlBasins}. However, the identification of basins of attraction might require the exploration of the complete state space. Attractor reachability can be investigated using trap spaces, which are subspaces that trajectories cannot leave. By definition, every trap space contains at least one attractor and, therefore, in some cases minimal trap spaces can be good approximations for the attractors \cite{AttractorApprox_Klarner}. The identification of trap spaces in biological systems can often be performed efficiently by exploiting properties of the prime implicants \cite{Implicants_Klarner}. 

Our approach aims to identify strategies for phenotype control by exploiting properties of trap spaces. We introduce the concept of space of attraction, a subspace that approximates the basin of attraction, to find control strategies without the need of computing the whole basin. We extend this idea to define spaces of attraction for trap spaces and relate them to control strategies, which are defined as sets of constraints that fix the value of some variables and induce a certain target phenotype. We exploit properties of trap spaces and computation techniques for target control to define a new method to compute control strategies that do not require a permanent intervention and allow the control to be eventually released. Our approach can incorporate information about the attractors to obtain new control strategies that might escape percolation-based target control techniques. The method presented here is widely applicable to Boolean models of biological systems and can provide, under certain conditions, control strategies that are independent of the type of update used in the model. 

We start by giving a general overview about Boolean modeling (\Cref{Background}). Then we introduce the concepts of control strategy and space of attraction in this setting (\Cref{SoA and CA}), providing the theoretical bases for the computation of some types of control strategies. In \Cref{Computation}, we present a method to compute control strategies based on the theoretical principles explained in \Cref{SoA and CA} and implemented using the prime implicants of the function. Lastly, in \Cref{Application} we show the applicability of our method to two cell fate decision networks \cite{MAPK_network,TLGL_network}.

\section{Background: Boolean networks and dynamics} \label{Background}

A \textit{Boolean network} on $n$ variables is defined as a function $f\colon \B^n \rightarrow \B^n$, where $\B = \{0,1\}$. $V = \{1,...,n\}$ is the set of variables of $f$, $\B^n$ is the \textit{state space} of the Boolean network and every $x \in \B^n$ is a \textit{state} of the state space. For any $x \in \B^n$ and $I \subseteq V$, $\bar{x}^I$ is defined as $\bar{x}^I_i = x_i$ for $i \in V \backslash I$ and $\bar{x}^I_i = 1 - x_i$ for $i \in I$. If $I = \{i\}$, $\bar{x}^I$ is written as $\bar{x}^i$.

A \emph{dynamics} on $\B^n$ or \emph{state transition graph} is a directed graph with vertex set $\B^n$. There are several ways of associating a dynamics to a Boolean network $f$. In the \emph{general asynchronous dynamics} or \emph{general asynchronous state transition graph} $GD(f)$ there exists an edge from a vertex $x$ to a vertex $y$ if and only if there exists $\emptyset \neq I \subseteq V$ such that $\bar{x}^I = y$ and $f_i(x) = y_i$ for every $i \in I$. Note that the general asynchronous dynamics considers transitions which update subsets of components simultaneously in a non-deterministic way. By choosing different types of updates, other state transition graphs can be defined. The \emph{asynchronous dynamics} $AD(f)$ is defined by considering the transitions updating only one component at a time and the \emph{synchronous dynamics} $SD(f)$ considers only the transitions where all the components that can be updated are updated at once. Note that $AD(f)$ and $SD(f)$ are subgraphs of $GD(f)$. To simplify the notation, $D(f)$ will denote any of these dynamics associated to $f$. The choice of asynchronous and general asynchronous updates is motivated by the attempt to capture different, and sometimes unknown, time scales that might coexist in the modeled system. An example of asynchronous dynamics of a Boolean network is shown in \Cref{ex_space_basin_attraction}.

A \emph{trap set} $T \subseteq \B^n$ is a set such that for all $x \in T$, if $y$ is a successor of $x$ in the dynamics, then $y \in T$. A minimal trap set under inclusion is an \emph{attractor}. An attractor can be a \emph{stable state} (or \emph{fixed point}), when it consists only of one state, or a \emph{cyclic} (or \emph{complex) attractor} when it is larger. In biological systems, stable states can be identified with different cell fates or cell types, and cyclic attractors with cell cycles or specific cell processes. Given a Boolean function $f$ and an attractor $A$, the \textit{weak basin of attraction} of $A$ is defined as the set of states $x$ such that there exists a path from $x$ to an element of $A$ in $D(f)$. The \textit{strong basin of attraction} of $A$ is the set of states in the weak basin of $A$ that do not belong to the weak basin of attraction of any other attractor different from $A$. \Cref{ex_space_basin_attraction} shows the weak and strong basins for an attractor in an asynchronous state transition graph.

The control interventions considered in this work consist in fixing the values of some components. Formally, given a state $c \in \B^n$ and a subset of variables $I \subseteq V$, we define the \emph{subspace} induced by $c$ and $I$ as the set $\Sigma(I,c) = \{x \in \B^n |\text{ }\forall i \in I, x_i = c_i \}$. The variables in $I$ are called \emph{fixed variables}, while the other variables are called \emph{free}. We denote subspaces as states, using the symbol $*$ for the free variables. For example, the subspace $\{ x \in \B^4 | x_1 = 1 \text{ and } x_3 = 0\}$ is denoted as $1*0*$. 

The identification of control variables requires examining the effect that fixing certain variables has on the dynamics. Given a Boolean function $f$ and a subspace $\Theta = \Sigma(I,c)$, the restriction of the function $f$ to the subspace $\Theta$ is defined as:
$$
\fT\colon \Theta \rightarrow \Theta,
\text{ where for all } i \in V \text{, } (\fT)_i(x) = \left\{
\begin{array}{ll}
f_i(x), & i \notin I, \\
c_i, &  i \in I. \\
\end{array}
\right.
$$
Note that $\fT\colon \Theta \rightarrow \Theta$ can be identified with a Boolean network $g\colon \B^m \rightarrow \B^m$, where $m = n - |I|$. Via this identification, we extend all the definitions that apply to a Boolean network to such restrictions. For example, the state transition graph corresponding to $\fT\colon \Theta \rightarrow \Theta$ is defined as usual, only with vertex set $\Theta$ instead of $\B^n$ (see \Cref{ex_cs_not_found_by_perc}). Moreover, if $T$ is a trap set in $D(f)$, then $T \cap \Theta$ is a trap set in $D(\fT)$.

A subspace that is also a trap set is called a \emph{trap space}.
While trap sets and attractors might vary when considering different types of dynamics, trap spaces are independent of the type of update. The Boolean function represented in \Cref{ex_space_basin_attraction} has four trap spaces: $000$, $111$, $0*0$, $***$.

In this work we aim at using trap spaces to find control strategies for phenotypes. Phenotypes are usually defined in terms of the state of some measurable components called biomarkers, which are observable components that can be used as indicators of different cell types or cell fates or to distinguish between healthy and pathological conditions. Although the notion of phenotype is usually related to stability, we extend this concept to consider any possible state in order to allow non-attractive states satisfying the phenotype characteristics to become attractors in the controlled system. Thus, in this work, we define a \emph{phenotype} as a subspace.

\begin{figure}
\hspace{0.3cm}\begin{minipage}{0.25\linewidth}
\begin{center}
\tikz[overlay]{
\filldraw[fill = red!10, draw=red, thick] (-1.87,-2.6) rectangle (-1.22,-2.25);
\filldraw[fill = blue!10, draw=blue] (1.17,-0.25) rectangle (1.78,-0.65);}
\begin{tikzcd}[column sep=5, row sep=5]
& & & & & \\
& {\color{orange} 110} \arrow[ld, shift right = 1] \arrow[rr] & & 111 \\
{\color{orange} 100} \arrow[ur, shift right = 1] \arrow[dd] & & {\color{red} 101} \arrow[dd, shift right=1] &  \\
& {\color{red} 010} \arrow [ld] & & {\color{red} 011} \arrow[ld] \\
{\color{red} 000} & & {\color{red} 001} \arrow[ll] \arrow[uu, shift right=1] & \\
\end{tikzcd}
\end{center}
\end{minipage} \hspace{0.55cm}
\begin{minipage}{0.66\linewidth}
{\small
Basins of attraction of $A_1$:
\begin{itemize}
\item $Strong(A_1) = \{000, 001, 010, 011, 101 \}$
\item $Weak(A_1) = \{000, 001, 010, 011, 101, 100, 110 \}$ 
\end{itemize}
Spaces of attraction of $A_1$:
\begin{itemize}
\item $\Omega_1 = 0**$, $\Omega_2 = 00*$, $\Omega_3 = 01*$, $\Omega_4 = 0*0$, $\Omega_5 = 0*1$, $\Omega_6 = *01$, $\Omega_7 = 000$, $\Omega_8 = 001, \Omega_9 = 010$, $\Omega_{10} = 011$, $\Omega_{11} = 101$, with $\Omega_i \subsetneq Strong(A_1)$ for all $1 \leq i  \leq 11$.
\end{itemize}
}
\end{minipage}
\caption{Asynchronous dynamics of the Boolean function $f(x) = (\bar{x}_1 \bar{x}_2 x_3 \lor x_1 x_2$, $x_1 \bar{x}_2 \bar{x}_3 \lor x_1 x_2 x_3$, $x_1 x_2 \lor x_1 x_3 \lor x_2 x_3)$, with attractors $A_1 = 000$ and $A_2 = 111$ and trap spaces $000$, $111$, $0*0$, $***$. All the spaces of attraction of $A_1$ are included in its strong basin (in red) while the basin itself is not a space of attraction.}
\label{ex_space_basin_attraction}
\end{figure}

\section{Spaces of attraction and control strategies} \label{SoA and CA}

The strong basin of attraction of an attractor $A$ can be naturally related to control since, by definition, it contains all the states that have paths to $A$ but not to any other attractor.
In contrast to methods requiring basin exploration, we use subspace approximation of the basins combined with trap spaces computation. To do so, we extend the notion of basin of attraction to trap sets. We then exploit useful properties of trap spaces, e.g. independence of the update, efficient identification and potential approximation of attractors, to develop a new approach for the identification of control strategies.

\subsection{Control strategies}\label{sub:controlstrategies}

We now formalise the notion of control strategy for a phenotype. A control strategy is a subspace defined by a set of interventions that fix the value of some variables and thus force all attractors to be contained in the subspace defining the phenotype.

\begin{definition}
\label{def:cs}
Given a Boolean function $f$ and a phenotype $P \subseteq \B^n$, a \emph{control strategy (CS) for the phenotype $P$} in $D(f)$ is a subspace $\Theta \subseteq \B^n$ such that, for any attractor $A$ of $D(\fT)$, $A \subseteq P$.
\end{definition}

If the desired phenotype is a stable state in the original dynamics $(P = \{y\}$, $y \in \B^n)$, a control strategy for $P$ is a subspace $\Theta$ such that $y$ is the only attractor of $\fT$. \Cref{ex_cs_not_found_by_perc} shows an example of a control strategy for a stable state. The size of the subspace defining a control strategy represents the number of interventions in the system. Therefore, the most interesting control strategies are the subspaces that are maximal with respect to inclusion.

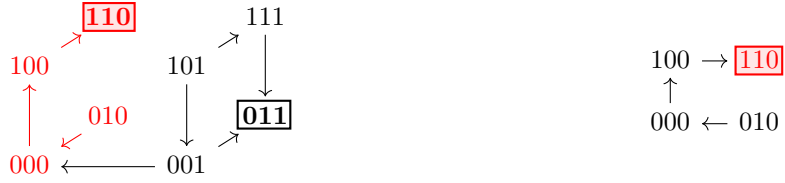
\begin{figure}
\begin{minipage}{0.6\linewidth}
\begin{center}
\tikz[overlay]{
\filldraw[fill = red!10, draw=red, thick] (1.25,0.88) rectangle (1.93,1.25);
\filldraw[fill = white, draw=black, thick] (3.3,-0.42) rectangle (4,-0.05);}
\begin{tikzcd}[column sep=5, row sep=5]
 & {\color{red} \textbf{110}} & & 111 \arrow[dd] \\
{\color{red}100} \arrow[ur, red] & & 101 \arrow[dd] \arrow[ru] & \\
 & {\color{red}010} \arrow[ld, red] & & \textbf{011} \\
{\color{red}000} \arrow[uu, red] & & 001 \arrow[ll] \arrow[ru] \\
\end{tikzcd}
\end{center}
\end{minipage}
\begin{minipage}{0.35\linewidth}
\begin{center}
\tikz[overlay]{
\filldraw[fill = red!10, draw=red, thick] (1.4,0.67) rectangle (2.02,0.3);}
\begin{tikzcd}[column sep=5, row sep=5]
100 \arrow[rr]  & & {\color{red}110} \\
& & \\
000  \arrow[uu] & & 010 \arrow[ll] \\
\end{tikzcd}
\end{center}
\end{minipage}
\caption{Asynchronous dynamics of the Boolean function $f(x) = (x_1 \bar{x}_3 \lor \bar{x}_2 \bar{x}_3$, $x_1 \lor x_3 $, $x_1 x_3 \lor x_2 x_3)$ (left) and $\fO(x) = (x_1 \lor \bar{x}_2$, $x_1$, $0)$ with $\Omega = **0$ (right). $\Omega$ is a control strategy for the phenotype $P = \{110\}$ in $AD(f)$. $\Omega$ does not percolate to $P$.}
\label{ex_cs_not_found_by_perc}
\end{figure}

A common approach in the context of control is the use of value percolation \cite{InterventionSets,TargetControl}. Different combinations of variables to be fixed are considered, and their values propagated iteratively until an invariant subspace is reached. A combination of variables and values is an intervention strategy if the subspace obtained at the end of the iterative percolation process is contained in the target phenotype. Strategies obtained with this approach satisfy the conditions of \Cref{def:cs}. However, the class of control strategies identified by the definition is larger, as we will discuss in the following.

\subsection{Spaces of attraction}

Trap sets are sets of states that the dynamics cannot leave. Each trap set contains, as a consequence, at least one attractor. The concept of basin of attraction defined for an attractor can be naturally extended to trap sets. As mentioned before, we wish to approximate basins of attraction by subspaces. Combining these two ideas, we introduce the concept of space of attraction of a trap set $T$ as a subspace $\Omega$ such that from any state in $\Omega$ there exists a path to $T$ and no trap set disjoint from $T$ is reachable from $\Omega$.

\begin{definition}
\label{def:space_attr}
Let $f$ be a Boolean function and $T$ be a trap set of $f$. A \emph{space of attraction} of the trap set $T$ in $D(f)$ is a subspace $\Omega$ such that for all $x \in \Omega$ and for any trap set $S$, if there exists a path in $D(f)$ from $x$ to an element of $S$, then $S \cap T \neq \emptyset$.
\end{definition}

\Cref{def:space_attr} implies the existence of a path from the space of attraction $\Omega$ to the trap set $T$. A trap set can have many spaces of attraction. In fact, any subspace contained in a space of attraction is also a space of attraction. Moreover, if there is only a unique trap set $T_m$ minimal with respect to inclusion contained in a trap set $T$, any space of attraction of $T$ is also a space of attraction of $T_m$. Both trap spaces and spaces of attraction are subspaces that characterize the long term behavior of the system. However, in contrast to trap spaces, spaces of attraction can depend on the update.

If a trap set is an attractor, its spaces of attraction can be related to its basins of attraction. The spaces of attraction of an attractor $A$ are clearly contained in the strong basin of $A$ since, by \Cref{def:space_attr}, none of the other attractors can be reached from any state inside the space of attraction. However, the strong basin of attraction of $A$ might not be a space of attraction (see \Cref{ex_space_basin_attraction}).

Spaces of attraction, as well as basins, might include paths crossing non-attractive cycles in the state transition graph. As a consequence, some paths starting in the space of attraction (or basin) might not reach the trap space (or attractor), staying indefinitely in non-attractive cycles.  While in very specific circumstances such behavior might be relevant, generally it constitutes an artifact arising from the non-deterministic update. Here, we extend the standard view on basins of attraction to spaces of attraction, assuming the trajectories of interest will eventually leave non-attractive strongly connected components in the state transition graph.

The condition that a subspace needs to satisfy to be a space of attraction of a trap set $T$ gets simplified when the subspace considered is the entire state space. In this case, it is only required that from every state in the state space trap set $T$ can be reached (see \Cref{soa_restr}), since it immediately follows that there cannot be a trap set disjoint from $T$. 

\begin{lemma}
\label{soa_restr}
Let $f$ be a Boolean function and $T$ a trap set of $f$. Then $\B^n$ is a space of attraction of the trap set $T$ in $D(f)$ if and only if for all $x \in \B^n$ there exists a path in $D(f)$ from $x$ to some $y \in T$.
\end{lemma}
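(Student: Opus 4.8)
The statement is a biconditional, so the plan is to prove the two implications separately; I expect them to be very lopsided in difficulty, with the reverse implication being essentially immediate and the forward one carrying all the real content.

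For the easy direction ($\Leftarrow$), suppose every state admits a path to $T$, and verify the space-of-attraction condition of \Cref{def:space_attr} for $\Omega = \B^n$. First I would fix an arbitrary state $x$ and an arbitrary trap set $S$ such that there is a path from $x$ to some $s \in S$, and then produce a point of $S \cap T$. The idea is simply to prolong the trajectory: by hypothesis there exists a path from $s$ to some $y \in T$, and since $S$ is a trap set containing $s$, this entire continuation remains inside $S$, so $y \in S \cap T$ and the intersection is nonempty. This uses nothing beyond the closure property defining a trap set.

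For the forward direction ($\Rightarrow$), suppose $\B^n$ is a space of attraction of $T$ and fix any $x$; the goal is to exhibit a path from $x$ to $T$. The plan is to locate an attractor reachable from $x$ and show that it is contained in $T$. I would first observe that the set $R(x)$ of states reachable from $x$ is itself a finite, nonempty trap set, hence contains a minimal trap set, i.e. an attractor $A$, which is then reachable from $x$. Applying the space-of-attraction hypothesis with $S = A$ yields $A \cap T \neq \emptyset$.

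The crux, and the step I expect to be the main obstacle, is upgrading $A \cap T \neq \emptyset$ to $A \subseteq T$. Here I would invoke the elementary fact that the intersection of two trap sets is again a trap set: if $z \in A \cap T$, then every successor of $z$ lies in $A$ and in $T$, hence in $A \cap T$. Thus $A \cap T$ is a nonempty trap set contained in $A$, and minimality of $A$ forces $A \cap T = A$, that is, $A \subseteq T$. Any path witnessing the reachability of $A$ from $x$ therefore ends in a state of $A \subseteq T$, giving the desired path from $x$ to $T$. The only point demanding care is the existence of a reachable attractor, which rests on the finiteness of $\B^n$ together with the remark that $R(x)$ is a trap set; the remainder is bookkeeping with the definitions.
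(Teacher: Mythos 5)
Your proof is correct. Note that the paper itself offers no proof of this lemma: it is stated as immediate, with only the parenthetical remark that if every state can reach $T$ then ``there cannot be a trap set disjoint from $T$.'' Your ($\Leftarrow$) direction is precisely a fleshed-out version of that remark --- prolong a path into a trap set $S$ until it hits $T$, using closure of $S$ under successors. The ($\Rightarrow$) direction is the part the paper is silent about (it is implicitly asserted right after \Cref{def:space_attr}, where the paper claims the definition already implies reachability of $T$), and your argument for it is the right one: the reachable set $R(x)$ is a finite nonempty trap set, hence contains a globally minimal trap set, i.e.\ an attractor $A$; the space-of-attraction hypothesis applied to $S=A$ gives $A \cap T \neq \emptyset$; and minimality of $A$ together with the fact that intersections of trap sets are trap sets upgrades this to $A \subseteq T$. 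A minor alternative for that last step: attractors are strongly connected (the set of states reachable from any $z \in A$ is a trap set inside $A$, hence equals $A$), so from the state where your path enters $A$ you can continue to any point of $A \cap T$ directly, without needing $A \subseteq T$. Either route closes the one genuine gap the paper leaves to the reader.
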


The application of \Cref{soa_restr} to the restriction on a subspace immediately yields the following corollary.

\begin{corollary}
\label{space_attr_restr}
Let $f$ be a Boolean function, $T$ a trap set of $f$ and $\Omega$ a subspace such that $T \subseteq \Omega$. Then $\Omega$ is a space of attraction of $T$ in $D(\fO)$ if and only if for all $x \in \Omega$ there exists a path in $D(\fO)$ from $x$ to some $y \in T$.
\end{corollary}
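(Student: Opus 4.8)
The plan is to obtain this statement as a direct consequence of \Cref{soa_restr}, applied not to $f$ on the whole state space but to the restricted function $\fO$ on $\Omega$. The essential observation is that $\fO\colon \Omega \rightarrow \Omega$ is identified, as explained in \Cref{Background}, with a Boolean network $g\colon \B^m \rightarrow \B^m$ where $m = n - |I|$, and under this identification $\Omega$ plays exactly the role of the full state space $\B^m$ of $g$. Since \Cref{soa_restr} is a statement about the entire state space of a Boolean network, the idea is to read it off for $g$ and $T$ and then translate back.

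First I would check that $T$ is in fact a trap set of the restricted dynamics $D(\fO)$, so that the notion ``space of attraction of $T$ in $D(\fO)$'' is well defined. By the remark in \Cref{Background}, if $T$ is a trap set in $D(f)$ then $T \cap \Omega$ is a trap set in $D(\fO)$; the hypothesis $T \subseteq \Omega$ gives $T \cap \Omega = T$, so $T$ itself is a trap set of $\fO$. This is also the step where the inclusion assumption $T \subseteq \Omega$ is used, as it removes any distinction between $T$ and $T \cap \Omega$.

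Next I would transfer $T$ through the identification of $\fO$ with $g$, so that $T$ corresponds to a trap set of the Boolean network $g$ sitting inside its full state space $\B^m \cong \Omega$. Applying \Cref{soa_restr} to $g$ and this trap set then yields: the full state space, i.e.\ $\Omega$, is a space of attraction of $T$ in $D(\fO)$ if and only if for every state in $\Omega$ there is a path in $D(\fO)$ to some $y \in T$. Translating the two equivalent conditions back through the identification reproduces exactly the two sides of the corollary.

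I do not expect a genuine obstacle here, as the result is a reformulation of the lemma in a restricted setting; the only point requiring care is that each ingredient appearing in \Cref{def:space_attr} and \Cref{soa_restr} — namely \emph{path}, \emph{trap set}, and \emph{space of attraction} — is faithfully preserved by the identification of $\fO$ with $g$. This is precisely guaranteed by the convention stated in \Cref{Background} that all definitions applicable to a Boolean network extend to such restrictions via this identification, so the argument reduces to noting that $\Omega$ corresponds to the whole state space of $g$ and invoking \Cref{soa_restr}.
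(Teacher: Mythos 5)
Your proposal is correct and follows exactly the paper's route: the paper derives \Cref{space_attr_restr} by applying \Cref{soa_restr} to the restriction $\fO$ viewed as a Boolean network on $\Omega$, which is precisely your argument. Your additional check that $T \subseteq \Omega$ ensures $T = T \cap \Omega$ is a trap set of $D(\fO)$ is a detail the paper leaves implicit, and it is the right justification for why the lemma applies.
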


In other words, a space of attraction of a trap set for the Boolean function restricted to that subspace can be understood as the restrictions that we can impose on the function $f$ to lead the dynamics to a certain trap set. If $T$ is a trap space, there is always a trivial space of attraction for the restricted function which is $T$ itself.

Note that a subspace $\Omega$ that is a space of attraction of $T$ for the Boolean function $f$ is not necessarily a space of attraction for the restricted function $f_{\restriction_\Omega}$ (see \Cref{Example}).

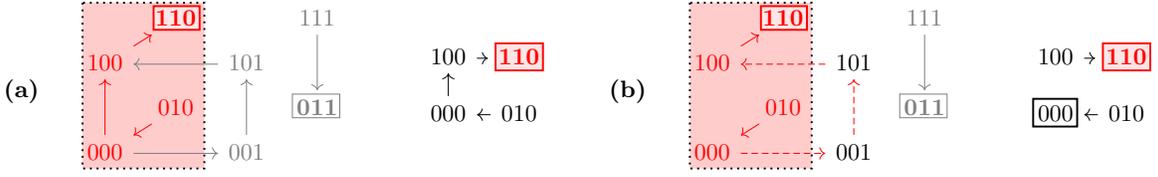
\begin{figure} 
\textbf{\small(a)} \hspace{-0.32cm} \begin{minipage}{0.3\linewidth}
\adjustbox{scale=0.9,center}{
\tikz[overlay]{
\filldraw[fill = red!20,draw=black, thick, dotted] (0.2,1.3) rectangle (2,-1.15);
\filldraw[fill = red!10, draw=red, thick] (1.25,1.25) rectangle (1.93,0.9);
\filldraw[fill = white, draw=gray] (3.3,-0.05) rectangle (4,-0.4);}
\begin{tikzcd}[column sep=5, row sep=5]
 & {\color{red} \textbf{110}} & & {\color{gray}111} \arrow[dd, gray] \\
{\color{red}100} \arrow[ur, red] & & {\color{gray} 101} \arrow[ll, gray] & \\
 & {\color{red}010} \arrow[ld, red] & & {\color{gray}\textbf{011}} \\
{\color{red}000} \arrow[uu, red] \arrow[rr, gray] & & {\color{gray} 001} \arrow[uu, gray] \\
\end{tikzcd}
}
\end{minipage}
\begin{minipage}{0.15\linewidth}
\adjustbox{scale=0.9,center}{
\tikz[overlay]{
\filldraw[fill = red!10, draw=red, thick] (1.23,0.7) rectangle (1.93,0.3);}
\begin{tikzcd}[column sep=5, row sep =5]
100 \arrow[r] & {\color{red}\textbf{110}} \\
& \\
000 \arrow[uu] & 010 \arrow[l] \\
\end{tikzcd}
}
\end{minipage}
\hspace{0.3cm} \textbf{\small (b)} \hspace{-0.32cm} \begin{minipage}{0.3\linewidth}
\adjustbox{scale=0.9,center}{
\tikz[overlay]{
\filldraw[fill = red!20,draw=black, thick, dotted] (0.2,1.3) rectangle (2,-1.15);
\filldraw[fill = red!10, draw=red, thick] (1.25,1.25) rectangle (1.93,0.9);
\filldraw[fill = white, draw=gray] (3.3,-0.05) rectangle (4,-0.4);}
\begin{tikzcd}[column sep=5, row sep=5]
 & {\color{red} \textbf{110}} & & {\color{gray}111} \arrow[dd, gray] \\
{\color{red}100} \arrow[ur, red] & & 101 \arrow[ll, red, dashed] & \\
 & {\color{red}010} \arrow[ld, red] & & {\color{gray}\textbf{011}} \\
{\color{red}000} \arrow[rr, red, dashed] & & 001 \arrow[uu, red, dashed] \\
\end{tikzcd}
}
\end{minipage}
\begin{minipage}{0.15\linewidth}
\adjustbox{scale=0.9,center}{
\tikz[overlay]{
\filldraw[fill = white,draw=black, thick] (0.85,-0.53) rectangle (0.2,-0.12);
\filldraw[fill = red!10, draw=red, thick] (1.23,0.7) rectangle (1.93,0.3);}
\begin{tikzcd}[column sep=5, row sep=5]
100 \arrow[r] & {\color{red}\textbf{110}} \\
& \\
000 & 010 \arrow[l] \\
\end{tikzcd}
}
\end{minipage}
\caption{\textbf{(a)} $\Omega = **0$ is a space of attraction for $AD(f)$ and $AD(\fO)$, with $f(x) = (\bar{x}_2 \lor x_1 \bar{x}_3, x_1 \bar{x}_3 \lor x_2 x_3, \bar{x}_1 \bar{x}_2 \lor x_2 x_3)$ and $\fO(x) = (\bar{x}_2 \lor x_1, x_1, 0)$. \textbf{(b)} $\Omega = **0$ is a space of attraction for $AD(g)$ but not for $AD(g_{\restriction_\Omega})$, with $g(x) = (x_1 \bar{x}_3 \lor \bar{x}_2 x_3 \lor x_1\bar{x}_2, x_1 \bar{x}_3 \lor x_2 x_3, \bar{x}_1 \bar{x}_2 \lor x_2 x_3)$ and $g_{\restriction_\Omega} (x) = (x_1, x_1, 0)$.}
\label{Example}
\end{figure}

Given a trap space $T$ that only contains attractors belonging to a certain phenotype $P$, any space of attraction that leads the system to $T$ would also lead it to an attractor belonging to $P$. In other words, any space of attraction for a trap space $T$ is also a control strategy for a phenotype $P$ if $T$ only contains attractors belonging to $P$. The following proposition formalizes this idea.

\begin{proposition}
\label{prop_CS}
Let $P \subseteq \B^n$ be a subspace and $f$ a Boolean function. Let $T$ be a trap space such that if $A \subseteq T$ is an attractor of $D(f)$, then $A \subseteq P$. Let $\Omega$ be a space of attraction of $T$ in $D(\fO)$ such that $T \subseteq \Omega$. Then $\Omega$ defines a control strategy in $D(f)$ for $P$.
\end{proposition}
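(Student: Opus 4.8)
The goal is to verify the condition of \Cref{def:cs} for $\Theta = \Omega$: I must show that every attractor $A$ of $D(\fO)$ satisfies $A \subseteq P$. So the plan is to fix an arbitrary attractor $A$ of $D(\fO)$ and prove first that $A \subseteq T$, and then that $A$ is in fact an attractor of $D(f)$, so that the hypothesis on $T$ can be invoked.

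First I would exploit the space-of-attraction assumption. Since $T \subseteq \Omega$, \Cref{space_attr_restr} tells me that $\Omega$ being a space of attraction of $T$ in $D(\fO)$ is equivalent to every state of $\Omega$ having a path to $T$ in $D(\fO)$. Picking any $a \in A \subseteq \Omega$, there is such a path from $a$ to some $y \in T$. Because $A$ is a trap set, this whole path stays inside $A$, so $y \in A \cap T$, and in particular $A \cap T \neq \emptyset$. Now $A \cap T$ is an intersection of two trap sets of $D(\fO)$ — here I use the background fact that $T = T \cap \Omega$ is a trap set of $D(\fO)$ — hence itself a nonempty trap set contained in $A$; minimality of the attractor $A$ then forces $A \cap T = A$, that is, $A \subseteq T$.

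The crux is then to transfer $A$ from being an attractor of $D(\fO)$ to being an attractor of $D(f)$, so that the phenotype hypothesis (which is stated for $D(f)$) applies. The key observation I would establish is that $f$ and $\fO$ coincide on $T$. Writing $\Omega = \Sigma(I,c)$ and $T = \Sigma(J,d)$, the inclusion $T \subseteq \Omega$ forces $I \subseteq J$ with $c_i = d_i$ for $i \in I$; and since $T$ is a trap \emph{space}, its fixed coordinates cannot change along the dynamics, so $f_i(x) = d_i = c_i$ for every $x \in T$ and every $i \in I$. Comparing this with the componentwise definition of $\fO$ (equal to $f$ on the free coordinates, frozen to $c_i$ on $I$) yields $\fO(x) = f(x)$ for all $x \in T$. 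Consequently $D(f)$ and $D(\fO)$ induce the same subgraph on $T$, and since $T$ is a trap set in both dynamics, a subset of $T$ is a (minimal) trap set for one dynamics if and only if it is for the other. In particular $A \subseteq T$ is an attractor of $D(f)$.

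Finally, the hypothesis on $T$ states that every attractor of $D(f)$ contained in $T$ lies in $P$; applied to $A$ this gives $A \subseteq P$. As $A$ was an arbitrary attractor of $D(\fO)$, \Cref{def:cs} is satisfied and $\Omega$ defines a control strategy for $P$. I expect the only genuine obstacle to be the identity $\fO = f$ on $T$ in the third paragraph: one must check that this holds uniformly across the admissible update schemes $GD$, $AD$ and $SD$, which is precisely where the trap-\emph{space} (rather than merely trap-set) assumption on $T$ is essential, since it is what guarantees the fixed coordinates of $\Omega$ are frozen throughout $T$.
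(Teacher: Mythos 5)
Your proof is correct and follows essentially the same route as the paper's: establish $A \cap T \neq \emptyset$ from the space-of-attraction property, use the fact that the intersection of trap sets is a trap set together with minimality of $A$ to conclude $A \subseteq T$, and then transfer $A$ to an attractor of $D(f)$ via the identity $\fO = f$ on $T$. The only differences are cosmetic: the paper gets $A \cap T \neq \emptyset$ directly from \Cref{def:space_attr} (applied with $S = A$ and the trivial path) rather than through \Cref{space_attr_restr}, and it asserts without elaboration the identity $\fO = f$ on $T$, which you justify in detail.
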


\begin{proof}
Let $A$ be an attractor for $D(\fO)$. Then $A \subseteq \Omega$. Since $\Omega$ is a space of attraction of $T$ in $D(\fO)$ and $A$ is a trap set in $D(\fO)$, $T \cap A \neq \emptyset$. As $T$ and $A$ are trap sets, $T \cap A$ is also a trap set in $D(\fO)$. Since $A$ is minimal,  $A = T \cap A \subseteq T$. Then, since $T$ is a trap space and for all $x \in T, \fO (x) = f(x)$, $A$ is also an attractor of $D(f)$ and, therefore, $A \subseteq P$. 
\end{proof}

Since a trap space is always a space of attraction of itself, given a subset $P \subseteq \B^n$, any trap space $T$ containing only attractors in $P$ is a control strategy for $P$.

The type of control strategies identified by \Cref{prop_CS} allow the interventions to be released after a certain number of steps. That is because these control strategies induce the target phenotype by leading the system to a trap space. Once the trap space is reached, since the dynamics cannot leave it, the control can be released and the system will remain in the trap space, eventually evolving to the phenotype of interest.

\subsection{Identification of spaces of attraction}

As explained in the previous section, control strategies for a phenotype $P$ can be found by identifying spaces of attraction of trap spaces containing only attractors in $P$. In this section, we explore ways of finding spaces of attraction for trap spaces.

Given a trap space $T$, we look for a subspace $\Omega$ such that from all states in $\Omega$ there is a path to $T$ in $D(\fO)$. To do so, we use the idea of value percolation, which is a common approach in the context of control. As explained in \Cref{sub:controlstrategies}, it is based on the fact that the constraints given by the fixed variables of a subspace might induce further variables to get fixed. Thus, in our setting, a subspace $\Omega = \Sigma(W,c)$ that percolates to the trap space $T = \Sigma(U,c)$ is a space of attraction of $T$ in $\fO$. The following lemma formalizes this idea.

\begin{lemma}
\label{lemma_path}
Let $f\colon \B^n \rightarrow \B^n$ be a Boolean function, $c \in \B^n$ and $S = \Sigma(U,c)$, $\Omega = \Sigma(W,c)$ subspaces of $\B^n$ such that $S \subseteq \Omega$ and $W \subseteq U \subseteq V$. If for all $s \in U \backslash W$, $f_s(x) = c_s$ for all $x \in \Omega$, then for all $x \in \Omega$ there exists a path in $D(\fO)$ from $x$ to some $y \in S$.
\end{lemma}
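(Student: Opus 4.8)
The plan is to produce, for an arbitrary starting state $x \in \Omega$, an explicit path in $D(\fO)$ that successively fixes the coordinates in $U \backslash W$ to their target values $c_s$, ending in a state lying in $S$. The decisive point is that the hypothesis is assumed at \emph{every} state of $\Omega$, not merely at the initial $x$, so the relevant updates remain available throughout the construction no matter in what order they are performed.

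First I would translate the hypothesis into a statement about $\fO$. Since $W \subseteq U$, any $s \in U \backslash W$ lies outside $W$, so by the definition of the restriction $(\fO)_s(x) = f_s(x)$, and the assumption then gives $(\fO)_s(x) = c_s$ for all $x \in \Omega$. Now fix $x \in \Omega$ and let $J = \{\, s \in U \backslash W : x_s \neq c_s \,\}$. In the asynchronous dynamics (and hence, since it is a subgraph, in the general asynchronous dynamics) I would update the elements of $J$ one at a time: for each such $s$, the transition flipping coordinate $s$ is a valid edge because $(\fO)_s(x) = c_s \neq x_s$ on $\Omega$, and it changes only the coordinate $s \notin W$, so all coordinates indexed by $W$ stay equal to $c$ and the successor remains in $\Omega$. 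Because each update alters a single coordinate, the coordinates already set to their targets are preserved, and the constant value $(\fO)_s = c_s$ continues to hold at every intermediate state.

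After at most $|J|$ steps every coordinate in $U \backslash W$ equals $c$; combined with $x_i = c_i$ for $i \in W$, which holds initially and is preserved, the resulting state $y$ satisfies $y_i = c_i$ for all $i \in U$, i.e. $y \in S = \Sigma(U,c)$. For the synchronous dynamics the same conclusion is reached in a single step, since $\fO(x)$ already has every coordinate of $U$ equal to $c$ (those in $W$ by the restriction, those in $U \backslash W$ by the observation above). I do not expect a genuine obstacle here: the content is essentially the percolation remark preceding the lemma, and the only point requiring care is verifying that the intermediate states never leave $\Omega$, which is precisely what guarantees that the values $c_s$ stay forced at each stage.
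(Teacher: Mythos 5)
Your proposal is correct and follows essentially the same route as the paper: a case split over the asynchronous, synchronous, and general asynchronous dynamics, with single-coordinate updates driving each $s \in U \backslash W$ to $c_s$ in $AD(\fO)$, a single synchronous step landing in $S$, and the $GD$ case inherited from the subgraph relation. Your version merely spells out details the paper leaves implicit (that intermediate states stay in $\Omega$, so the hypothesis keeps applying along the constructed path).
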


\begin{proof}
Since the proof depends on the update, we treat each case separately.

$D = AD$: For each $x \in \Omega$ and for each $s \in U \backslash W$ such that $x_s \neq c_s$, $f_s(x) = c_s$. Therefore, $x$ admits a successor $y$ in $AD(\fO)$ with $y_s = c_s$. This implies the existence of a path in $AD(\fO)$ from any state in $\Omega$ to $S$.

$D = SD$: For each $x \in \Omega$ and for each $s \in U$, $f_s(x) = c_s$. Therefore, $x$ admits a successor $y \in S \subseteq \Omega$ in $SD(\fO)$. 

$D = GD$: Since all the paths in $AD(f)$ and $SD(f)$ are also paths in $GD(f)$, the conclusion follows from the previous cases.
\end{proof}

\Cref{lemma_path} can be extended with \Cref{space_attr_restr} to provide conditions that allow the identification of spaces of attraction.

\begin{lemma}
\label{lemma_perc_1}
Let $f\colon \B^n \rightarrow \B^n$ be a Boolean function and $T = \Sigma(U, c)$ a trap space of $f$ with $U \subseteq V$ and $c \in \B^n$. Let $\Omega = \Sigma(W, c)$ be a subspace of $\B^n$ such that $T \subseteq \Omega$ and $W \subseteq U \subseteq V$. If $f_s(x) = c_s$ for all $x \in \Omega$ and $s \in U \backslash W$, then $\Omega$ is a space of attraction of $T$ for $D(\fO)$.
\end{lemma}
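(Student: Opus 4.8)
The plan is to recognize that the hypotheses stated here are precisely those of \Cref{lemma_path} with $S = T$, and then to convert the resulting reachability statement into the space-of-attraction property via \Cref{space_attr_restr}. So the whole argument is a short composition of two earlier results, with one routine verification in between.

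First I would observe that since $T = \Sigma(U,c)$ and $\Omega = \Sigma(W,c)$ with $W \subseteq U \subseteq V$ and $T \subseteq \Omega$, and since by assumption $f_s(x) = c_s$ for all $x \in \Omega$ and all $s \in U \backslash W$, the data $(T,\Omega,f)$ satisfy all the assumptions of \Cref{lemma_path} upon setting $S := T$. Applying \Cref{lemma_path} directly then yields that for every $x \in \Omega$ there exists a path in $D(\fO)$ from $x$ to some $y \in T$.

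Next I would check the one hypothesis needed to invoke \Cref{space_attr_restr}, namely that $T$ is a trap set in $D(\fO)$. Since $T$ is a trap space of $f$, it is in particular a trap set in $D(f)$ (and trap spaces are independent of the update). By the restriction property recorded in \Cref{Background}---if $T$ is a trap set in $D(f)$, then $T \cap \Omega$ is a trap set in $D(\fO)$---combined with $T \subseteq \Omega$, which gives $T \cap \Omega = T$, it follows that $T$ itself is a trap set in $D(\fO)$. With $T$ a trap set in $D(\fO)$ and $T \subseteq \Omega$, the ``if'' direction of \Cref{space_attr_restr} applies, and the reachability conclusion from the previous step is exactly the condition that makes $\Omega$ a space of attraction of $T$ in $D(\fO)$, which finishes the proof.

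I expect no genuine obstacle here: the statement is essentially the concatenation of \Cref{lemma_path} (which supplies percolation-driven reachability, case by case in the update) and \Cref{space_attr_restr} (which characterizes spaces of attraction of a contained trap set by reachability). The only point requiring a moment of care is confirming that $T$ persists as a trap set after restriction to $\Omega$, so that \Cref{space_attr_restr} is legitimately applicable; this is immediate from $T \subseteq \Omega$ together with the general restriction fact, so the verification is purely formal.
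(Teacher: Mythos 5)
Your proof is correct and matches the paper's own route exactly: the paper presents \Cref{lemma_perc_1} as the immediate combination of \Cref{lemma_path} (applied with $S = T$) and \Cref{space_attr_restr}, which is precisely your composition. Your explicit check that $T$ remains a trap set in $D(\fO)$ (via the restriction fact and $T \cap \Omega = T$) is a detail the paper leaves implicit, and it is verified correctly.
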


To improve the spaces of attraction obtained with \Cref{prop_CS}, we can extend \Cref{lemma_perc_1} applying the idea used in \Cref{lemma_path} several times, building a path of percolated subspaces ending in the trap space $T$.

\begin{proposition}
\label{perc_soa}
Let $f\colon \B^n \rightarrow \B^n$ be a Boolean function and let $c \in \B^n$. Let $T = \Sigma(U,c)$  be a trap space and $\Omega = \Sigma(W, c)$ be a subspace containing $T$ with $W \subseteq U \subseteq V$. Let $I_0 = W$ and $I_{k+1} = \{ s \in U | s \in I_k \text{ or } f_s(x) = c_s \text{ for all } x \in S_k \}$, where $S_k = \Sigma(I_k, c)$. If there exists a $k_T$ such that $I_{k_T} = U$, then $\Omega$ is a space of attraction of $T$ for $D(\fO)$.
\end{proposition}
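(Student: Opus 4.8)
The plan is to verify the criterion of \Cref{space_attr_restr}: since $T \subseteq \Omega$, it suffices to show that from every $x \in \Omega$ there is a path in $D(\fO)$ to some $y \in T$. The hypotheses give a finite chain of index sets $W = I_0 \subseteq I_1 \subseteq \dots \subseteq I_{k_T} = U$, and hence a nested chain of subspaces $\Omega = S_0 \supseteq S_1 \supseteq \dots \supseteq S_{k_T} = T$ (more fixed variables means a smaller subspace). The proposition is thus an iterated version of \Cref{lemma_perc_1}, and I would prove it by induction on $k$, establishing the claim: for every $x \in \Omega$ there is a path in $D(\fO)$ from $x$ to some state of $S_k$. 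The base case $k = 0$ is trivial since $S_0 = \Omega$, and the case $k = k_T$ gives the desired path into $T = S_{k_T}$.

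For the inductive step, suppose there is a $D(\fO)$-path from $x$ to some $z \in S_k$. I would apply \Cref{lemma_path} with its ``$\Omega$'' taken to be $S_k = \Sigma(I_k, c)$ and its ``$S$'' taken to be $S_{k+1} = \Sigma(I_{k+1}, c)$; the containments $S_{k+1} \subseteq S_k$ and $I_k \subseteq I_{k+1} \subseteq V$ hold, and the required condition ``$f_s(x) = c_s$ for all $x \in S_k$ and all $s \in I_{k+1}\setminus I_k$'' is precisely the definition of $I_{k+1}$. The lemma then produces a path from $z$ to some $y \in S_{k+1}$ in the dynamics $D(f_{\restriction_{S_k}})$ of the function restricted to the smaller subspace $S_k$. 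Concatenating with the inductive path yields a path from $x$ to $y \in S_{k+1}$, provided this second path can be read as a path in $D(\fO)$.

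The crux, and the step I expect to be the main obstacle, is exactly this transfer: \Cref{lemma_path} delivers a path in $D(f_{\restriction_{S_k}})$, whereas the statement concerns $D(\fO)$, a dynamics on the larger vertex set $\Omega$. I would resolve it by showing that $\fO$ and $f_{\restriction_{S_k}}$ agree as functions on $S_k$. Indeed, for $i \notin I_k$ both equal $f_i$, for $i \in W$ both equal $c_i$, and for $i \in I_k \setminus W$ the function $f_{\restriction_{S_k}}$ returns $c_i$ while $\fO$ returns $f_i$; but any such $i$ entered the chain at some stage $j+1 \le k$, so by construction $f_i(x) = c_i$ for all $x \in S_j \supseteq S_k$, and the two values coincide on $S_k$. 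Consequently $S_k$ is a trap set for $D(\fO)$ and the sub-dynamics of $D(\fO)$ induced on $S_k$ coincides with $D(f_{\restriction_{S_k}})$; equivalently, every transition of $D(f_{\restriction_{S_k}})$ updates only coordinates outside $I_k \supseteq W$, where $\fO$ computes $f$, so each such transition is legal in $D(\fO)$. This agreement is what makes the argument uniform across the three update modes: for $AD$ and $GD$ one only needs each individual transition to lift, while for $SD$, where successors are unique, one genuinely needs $\fO(x) = f_{\restriction_{S_k}}(x)$ on $S_k$, which the percolation property supplies. With the transfer established, the induction closes and \Cref{space_attr_restr} finishes the proof.
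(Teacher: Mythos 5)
Your proof is correct and follows exactly the route the paper intends: the paper gives no worked-out proof of \Cref{perc_soa}, only the remark that it follows by ``applying the idea used in \Cref{lemma_path} several times, building a path of percolated subspaces ending in the trap space $T$,'' which is precisely your induction along the chain $\Omega = S_0 \supseteq S_1 \supseteq \dots \supseteq S_{k_T} = T$, closed off via \Cref{space_attr_restr}. Your explicit check that $\fO$ and $f_{\restriction_{S_k}}$ agree on $S_k$ (because every $i \in I_k \setminus W$ entered the chain at some stage $j$ with $f_i \equiv c_i$ on $S_j \supseteq S_k$), so that paths of $D(f_{\restriction_{S_k}})$ lift to $D(\fO)$ in all three update schemes, supplies the one detail the paper's sketch leaves implicit --- and it is exactly the right detail to worry about.
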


\Cref{perc_soa} gives sufficient conditions for a subspace to be a space of attraction of a trap space in the restriction and, together with \Cref{prop_CS}, provides a way to identify control strategies for a given phenotype. However, not all spaces of attraction fall under the conditions given by \Cref{perc_soa}. The example in \Cref{Example} (a) shows a space of attraction $\Omega = **0$ for a trap space $T = 110$, which is also a control strategy for $P = \{110\}$, where $\Omega$ does not percolate to $T$.

Sometimes the attractors of a system of interest are known. In other cases they are not known but can be approximated by minimal trap spaces \cite{AttractorApprox_Klarner}, that is, each minimal trap space contains only one attractor and every attractor is included in a minimal trap space. This information is not usually exploited by target control methods, which often rely solely on percolation-like techniques. The approach described in this work can use this knowledge to find additional control strategies. If the attractors are known or they can be approximated by minimal trap spaces, we can easily find trap spaces satisfying the conditions of \Cref{prop_CS} by simply checking whether these attractors or minimal trap spaces are included in a trap space. Therefore, larger trap spaces containing only attractors of the target phenotype can be identified. By \Cref{prop_CS}, spaces of attraction for these trap spaces are also control strategies for the phenotype. These control strategies do not necessarily percolate to the phenotype and, therefore, might not be identified by usual percolation techniques. \Cref{ex_cs_not_found_by_perc} shows an example of such a control strategy, where $\Omega = T = **0$ is a space of attraction for the trap space $T$, which contains only the attractor $A = 110$, and so, is a control strategy for the phenotype $P = A$. Note that $\Omega$ does not percolate to $A$.

The attractors of a Boolean network might vary in different dynamics. Therefore, the trap spaces satisfying \Cref{prop_CS} and the control strategies characterized by them might also be dependent on the dynamics. Conversely, the spaces of attraction obtained by \Cref{perc_soa} are independent of the update. Thus, if the trap spaces considered satisfy the conditions of \Cref{prop_CS} in all the dynamics, the control strategies identified are also independent of the update.

\section{Computation of control strategies} \label{Computation}

We propose a method to find control strategies for a given phenotype, using the ideas explained in the previous section. The main steps of the method are represented in \Cref{fig:method} and the detailed procedure is shown in \Cref{alg:ca}.

In order to implement the computation of the control strategies, we use the prime implicants of the function. Given a Boolean function $f\colon\B^n \rightarrow \B^n$, a \emph{$c$-implicant} of $f_i$, with $c \in \B$ and $i \in V$, is a subspace $Q$ such that $f_i(x) = c$ for all $x \in Q$. A \emph{prime implicant} is an implicant that is maximal under inclusion. Given $T = \Sigma(U,c)$, finding a subspace satisfying the hypothesis of \Cref{lemma_perc_1} is equivalent to finding a subspace that is a $c_i$-implicant of $f_i$ for all $i \in U$. Moreover, prime implicants can also be used to compute the trap spaces \cite{Implicants_Klarner}. The computation of the prime implicants of a Boolean function is in general a hard problem. However, networks modeling biological systems are usually relatively sparse, since the number of components regulating a variable is relatively small compared to the size of the network. Therefore, they are rather tractable in terms of prime implicants computation. Several tools are available for the computation of prime implicants and trap spaces of Boolean functions. We use \textit{PyBoolNet} \cite{PyBoolNet}, a Python package that allows generation and analysis of Boolean networks and provides an efficient computation of prime implicants and trap spaces for quite large networks. \\

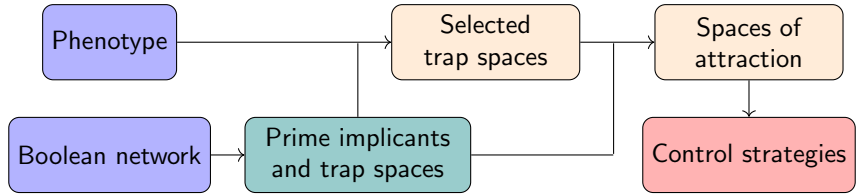
\begin{figure}[h]
\centering
\begin{tikzpicture}[node distance=1.5cm, every node/.style={fill=white, font=\sffamily}, align=center]
\node (Pheno) [rectangle, rounded corners, draw=black, minimum height=1cm, text centered, fill=blue!30] {Phenotype};
\node (BN) [rectangle, rounded corners, draw=black, minimum height=1cm, text centered, fill=blue!30, below of = Pheno] {Boolean network};
\node (Blank) [rectangle, draw=none, right of= Pheno, yshift=0.11cm, xshift=1.8cm] {};
\node (PIGTS) [rectangle, rounded corners, draw=black, minimum width= 3cm, minimum height=1cm, text centered, fill=teal!40, below of = Blank, yshift=-0.11cm] {Prime implicants \\ and trap spaces};
\node (selTS)  [rectangle, rounded corners, draw=black, minimum height=1cm, text centered, minimum width=2.5cm, fill=orange!15, right of= Pheno, xshift=3.5cm] {Selected \\ trap spaces};
\node (Blank2) [rectangle, draw=none, right of= selTS, yshift=0.11cm, xshift=0.2cm] {};
\node (Blank3) [rectangle, draw=none, right of= PIGTS, xshift=2.02cm] {};
\node (Blank4) [rectangle, draw=none, below of= Blank2, yshift=-0.22cm] {};
\node (SoA) [rectangle, rounded corners, draw=black, minimum height=1cm, text centered, minimum width=2.5cm, fill=orange!15, right of=selTS, xshift=2cm]   {Spaces of \\ attraction};
\node (CS) [rectangle, rounded corners, draw=black, minimum height=1cm, text centered, fill=red!30, below of=SoA] {Control strategies};
\draw[->] (BN) -- (PIGTS);    
\draw[-] (PIGTS) -- (Blank);
\draw[->] (Pheno) -- (selTS);
\draw[-] (PIGTS) -- (Blank3);
\draw[-] (Blank4) -- (Blank2);
\draw[->] (selTS) -- (SoA);
\draw[->] (SoA) -- (CS);
\end{tikzpicture}
\caption{Main steps of the method for finding control strategies for a phenotype, represented in color boxes according to their role: inputs (blue), precomputation (green), main computation (beige), output (red).}\label{fig:method}
\end{figure}

We describe now the main steps of the method, outlined in \Cref{fig:method}.

\begin{algorithm}
\renewcommand{\thealgorithm}{1}
\caption{Control strategies for a phenotype P}\label{alg:ca}
\hspace*{\algorithmicindent} \textbf{Input}: $f$ Boolean function, $P$ phenotype, $attr$ attractors of $f$ (optional) \\
\hspace*{\algorithmicindent} $m$ limit size of the control strategies (optional) \\
\hspace*{\algorithmicindent} \textbf{Output}: control strategies for $P$
\begin{algorithmic}[1]
\Function{ControlStrategies}{$f$, $P$, $attr$}
	\State \textbf{T} $\gets$ trapSpaces($f$)
	\State \textbf{selTS} $\gets$ selectedTrapSpaces1(\textbf{T}, $P$)
	\If {$attr \neq \emptyset$}:
		\State \textbf{selTS} $\gets$ selTS + selectedTrapSpaces2(\textbf{T}, $P$, $attr$)
	\EndIf
	\State \textbf{CA} $\gets \emptyset$
	\For{$i$ in $\{1, \dots$, min($m$, $n$)$\}$}: \Comment{$n$ total number of variables}
		\State \textbf{S} $\gets$ $\{$S subspace$\colon|$fixed(S)$|$ = $i$, $\exists$T $\in$ \textbf{selTS} with T $\subseteq$ S$\}$
		\For {S in \textbf{S}}:
			\If {(S $\not\subseteq$ S' for all S' in \textbf{CA}) \textbf{and} isSpaceAttraction($f$, S, \textbf{selTS})}:
				\State add S to \textbf{CA}
			\EndIf
		\EndFor
	\EndFor
	\State \Return \textbf{CA}
\EndFunction
\end{algorithmic}
\end{algorithm}		

\begin{algorithm}
\renewcommand{\thealgorithm}{2}
\caption{Subspace is a space of attraction}\label{alg:isSoa}
\hspace*{\algorithmicindent} \textbf{Input}: $f$ Boolean function, S subspace, \textbf{TS} trap spaces \\
\hspace*{\algorithmicindent} \textbf{Output}: \textit{True} if S is space of attraction of a trap space in \textbf{TS}. \textit{False} otherwise.
\begin{algorithmic}[1]
\Function{IsSpaceAttraction}{f, S, \textbf{TS}}
	\State f' $\gets$ percolateFunction(f, S)
	\State \Return isNotEmpty($\{$T in \textbf{TS}: T $\subseteq$ S \textbf{and} fixed(T) $\subseteq$ fixed(f')$\}$)
\EndFunction
\end{algorithmic}
\end{algorithm}	

\textbf{Inputs.} The inputs are the Boolean function describing the system and the subspace of the target phenotype $P$. The attractors, if known, are also used as input. Prime implicants and trap spaces can be given as input or computed from the Boolean function. 

\textbf{Selection of trap spaces.} Trap spaces of interest are divided into two types: trap spaces contained in $P$ (Type 1) and trap spaces not contained in $P$ but containing only attractors in $P$ (Type 2). As trap spaces have been identified in the previous step, this selection only requires checking whether a trap space belongs to one of the types (\Cref{alg:ca}: 3-5). Trap spaces of Type 2 are only identified when all the attractors are known or can be approximated by minimal trap spaces. In order to avoid unnecessary calculations, we do not consider trap spaces that percolate to smaller ones, since if a trap space $T_1$ percolates to a trap space $T_2$, all spaces of attractions of $T_1$ are also spaces of attraction of $T_2$.

\textbf{Computation of spaces of attraction.} 
Spaces of attraction for the trap spaces from the previous step are computed using the theoretical principles described in \Cref{perc_soa}. The detailed procedure is shown in \Cref{alg:ca}: 6-11.
For each subspace $S$ that contains at least one of the selected trap spaces (\Cref{alg:ca}: 8), it is checked whether it is a space of attraction for one of the selected trap spaces (\Cref{alg:ca}: 10). To do so, the percolated function of $f$ obtained by fixing the variables in $S$ is calculated (\Cref{alg:isSoa}: 2). If $T$ is contained in the subspace generated by $S$ and all the variables fixed in $T$ are also fixed in the percolated function, then the subspace generated by $S$ is a space of attraction of $T$ (\Cref{alg:isSoa}: 3). Since the aim is to find maximal spaces of attraction satisfying this property, the subspaces $S$ are taken randomly fixing an increasing number of variables, so that supersets of sets already defining a space of attraction are not considered (\Cref{alg:ca}: 8, 10).

\textbf{Output} The obtained spaces of attraction are control strategies for the phenotype $P$ by \Cref{prop_CS} and, therefore, are returned as output.
		
The method also allows to include some constraints on the control strategies. One example is the exclusion of some components, which can be taken into account when selecting the subspaces $S$ (\Cref{alg:ca}: 8). Another constraint to consider is on the size of the control strategies. Imposing a limit on the number of interventions might allow to reduce the computational cost without losing interesting solutions, since small control strategies are usually the most relevant.

\section{Application: cell fate decision networks} \label{Application}

In this section we discuss the application of our method to two Boolean networks describing cell fate decision processes. In the first case study we consider two different control problems, one having a phenotype as target for the control, the second targeting single attractors. The second case study focuses on phenotype control. We compare the control strategies identified by our approach to the ones obtained using exclusively value percolation, as described in \Cref{sub:controlstrategies}. We show that, for both examples, new control strategies can be identified with the procedure introduced in this work. 

All computations in this section were done on an 8-processor computer, Intel(R)Core(TM) i7-2600 CPU at 3.40GHz, 16GB memory, without any use of parallelization.

\subsection{MAPK network}

The network considered in this case study was introduced by Grieco et al. (2013) \cite{MAPK_network} to model the effect of the Mitogen-Activated Protein Kinase (MAPK) pathway on cell fate decisions taken in pathological cells (see \Cref{fig:mapk}). It uses 53 Boolean variables, four being inputs (DNA-damage, EGFR-stimulus, FGFR3-stimulus and TGFBR-stimulus) and three outputs (Apoptosis, Proliferation and Growth-Arrest).

The asynchronous dynamics has 18 attractors, 12 being stable states and 6 cyclic attractors. All of them can be approximated by minimal trap spaces, since each minimal trap space only contains one attractor and there is no attractor that is not contained in a minimal trap space \cite{AttractorApprox_Klarner}. Therefore, we can use trap spaces of both Type 1 and Type 2 to compute control strategies.

The phenotype chosen as target for the control is the apoptosis phenotype, which is defined in \cite{MAPK_network} as the states fixing Apoptosis and Growth Arrest to 1 and Proliferation to 0. There are 103 non-percolating trap spaces, which are trap spaces that do not percolate to smaller ones, containing only attractors in the apoptosis phenotype. Of these, 64 are of Type 1 and 39 of Type 2. We set an upper bound of four components to the size of the control strategies, since generally only small control strategies are of interest and this limit already allows to find relevant ones. In addition, we exclude interventions that fix any of the output nodes of the network. In this setting, we identify two control strategies of size 1 ($\{$TGFBR-stimulus = 1$\}$ and $\{$DNA-damage = 1$\}$) and no control strategies of size 2, 3 and 4. The running time is around 13 minutes.

Using exclusively the percolation of the fixed values we identify two control strategies of size 1 ($\{$TGFBR-stimulus = 1$\}$ and $\{$TGFBR = 1$\}$), 121 control strategies of size 2, 164 of size 3 and 139 of size 4. Looking at the Boolean function, we observe that TGFBR is uniquely regulated by TGFBR-stimulus, so fixing TGFBR-stimulus to 1 implies that TGFBR is also fixed to 1 and, therefore, these interventions are equivalent in terms of their effect on the apoptosis phenotype. However, it is obvious that if the control fixing TGFBR to 1 is released, TGFBR could be updated to zero again by TGFBR-stimulus, and this change would induce the system to leave the apoptosis phenotype. Therefore, the control of TGFBR requires a permanent intervention.

Our method uncovers the control strategy $\{$DNA-damage = 1$\}$, which is not obtained by using solely value percolation. In fact, the percolation of the subspace defined by this strategy does not reach the phenotype, but stops at the subspace $T = \{$DNA-damage = 1, ATM = 1, TAOK = 1$\}$. However, since $T$ is one of the trap spaces selected by our method, the constraint $\{$DNA-damage = 1$\}$ is identified as a control strategy.

Of the control strategies of size 2, 3 and 4 that can be identified by percolation, 18, 13 and 7 respectively are supersets of the control strategy $\{$DNA-damage = 1$\}$ identified by our method. For this reason, the subspaces obtained by percolating these interventions are contained in the trap space $T$ mentioned above and therefore the associated control can be eventually released, without affecting the reachability of the target. The remaining control strategies are not guaranteed to lead to a trap space. As a consequence, in these cases, an early release of the control could lead to the loss of the control goal. This illustrates how our method can complement previous approaches, by identifying control strategies of reduced complexity, and, consequently, reducing the number of interventions to be considered, while at the same time providing information about the effects of a possible release of the control.

The components appearing in the minimal control strategies identified (DNA-damage and TGFBR-stimulus) correspond to two inputs of the model. These inputs represent anti-proliferative stumuli from the MAPK network \cite{MAPK_network} and, therefore, can be expected to play an important role in the phenotype decision. It is, however, certainly interesting that they are capable of fully inducing the apoptosis phenotype without further conditions on internal processes.

In addition to the control problem for the apoptosis phenotype, we also searched for control strategies for the 10 apoptotic stable states. We set the maximum size of control strategies to five. For eight stable states ($A_1$ to $A_8$ in \Cref{tab:stablestates}) exactly one control strategy of size 4 is obtained. For stable state $A_9$, two control strategies of size 5 are found, and for $A_{10}$ no control strategies up to size 5 are identified. The list of stable states and their control strategies can be found in the supplementary material. The running time for one stable state is around 26 minutes. 

Since the chosen stable states belong to the apoptosis phenotype, all the selected trap spaces are also considered when computing the control strategies for the apoptosis phenotype. Therefore, the control strategies of the stable states are subspaces of the ones obtained for the apoptosis phenotype. One of the main differences is that the four inputs are present in all the control strategies of the stable states. The input variables are, by definition, not regulated by any component, and therefore must be directly controlled if the value in a given steady state is to be achieved. The analysis of the control problem for the phenotype revealed that fixing DNA-damage to 1 is enough to lead the system to the apoptosis subspace, but fixing the additional inputs is necessary to obtain a specific steady state. 
Fixing the four inputs is already enough to induce the stable states $A_1$ to $A_8$ solely by percolation. However, the stable states $A_9$ and $A_{10}$ require additional internal processes to be controlled. For $A_9$, the two control strategies identified do not percolate directly to the attractor, but lead the dynamics to one of the selected trap spaces. For $A_{10}$, no control strategies up to size 5 are found neither by our method nor percolation techniques, suggesting that a higher number of interventions might be necessary. These observations show that control for a phenotype can be more achievable than for a specific attractor, and thus in some cases more interesting for application.

\subsection{T-LGL network}

We now consider a control problem for the network introduced by Zhang  et al. (2008) \cite{TLGL_network} to model the T cell large granular lymphocite (T-LGL) survival signaling network (see \Cref{fig:tlgl}). It consists of 60 Boolean variables, six being inputs (CD45, IL15, PDGF, Stimuli, Stimuli2 and TAX) and three readouts (Apoptosis, Proliferation and Cytoskeleton-signaling).

The asynchronous dynamics has 156 attractors, 86 being stable states and 70 cyclic attractors. As in the previous network, all of them can be approximated by minimal trap spaces \cite{AttractorApprox_Klarner}. Therefore, we can use trap spaces of both Type 1 and Type 2 to compute control strategies. 

We consider the apoptosis phenotype defined by fixing Apoptosis to 1 and Proliferation to 0. Note that the third readout, Cytoskeleton signaling, is forced to 0 by its regulator Apoptosis having value 1. There are 883 non-percolating trap spaces, which are trap spaces that do not percolate to smaller ones, containing only attractors in the apoptosis phenotype. 729 trap spaces are of Type 1 and 154 of Type 2. As in the previous case study, we set an upper bound of four components to the size of the control strategies and we exclude interventions that fix any of the readout nodes of the network. In this setting, six control strategies are identified: three of size 3 ($\{$CD45 = 0, IL15 = 0, PDGF = 1$\}$, $\{$CD45 = 0, IL15 = 0, Stimuli = 1$\}$, $\{$CD45 = 0, IL15 = 0, TAX = 1$\}$) and three of size 4 ($\{$CD45 = 1, PDGF = 0, PDGFR = 0, Stimuli2 = 1$\}$, $\{$CD45 = 1, PDGF = 0, S1P = 0, Stimuli2 = 1$\}$, $\{$CD45 = 1, PDGF = 0, SPHK1 = 0, Stimuli2 = 1$\}$). The running time is around 15 minutes.

The three control strategies of size 3 consist only of input components. All the control strategies of size 4 have three components in common while the fourth varies within PDGFR, S1P and SPHK1, suggesting that these three interventions might be equivalent in terms of their effect on the apoptosis phenotype. In fact, by looking at the Boolean function, we observe that fixing PDGFR = 0, implies SPHK1 = 0, which also implies S1P = 0. Identifying such equivalent interventions a priori might allow to reduce the computational cost of the method. 

Using only percolation we find exactly one control strategy of size 1 ($\{$Caspase = 1$\}$) and none of size 2, 3 or 4. However, this control strategy is relatively trivial since the Caspase component is directly regulating Apoptosis. The control strategies identified by our method do not percolate directly to the phenotype. At the end of the percolation process, the dynamics reaches one of the trap spaces selected as containing only attractors in the apoptosis phenotype. \\

This case study highlights the added value of our approach which can uncover relevant system interventions that are not identified by usual percolation approaches.

\section{Discussion}

In this work, we considered properties of trap spaces and principles of target control to introduce a new approach to compute control strategies. The procedure proposed is applicable to both phenotype and attractor control and allows the interventions to be released after a certain amount of time, in contrast to usual target control methods that require permanent interventions.

The approach presented here is widely applicable to Boolean models of biological systems and can provide intervention strategies that are independent of the type of update considered in the modeling. Moreover, restrictions on the control strategies, in the form of variables to be excluded, can be added. Our approach also allows to incorporate information about the attractors, with the possibility to obtain control strategies that escape regular percolation-based techniques. As demonstrated with the two case studies, our method can identify new control strategies that require a small number of control variables, and can thus reveal valuable intervention approaches.

Our approach efficiently identifies control strategies for relatively large biological networks. A naturally important further step is a rigorous comparison with existing methods, for instance approaches based on stable motifs \cite{TargetControl,ControlMotifs}.
Furthermore, the performance of the method could benefit from the adoption of fine-tuning strategies developed to speed up some of the procedures involved in candidate screening. For instance, we could consider the reduction of the size of the search space by identifying a priori equivalent interventions, adapting existing approaches \cite{InterventionSets}. Further steps also include the extension of the method to other types of control, such as edge interventions or sequential control.

\section*{Acknowledgments}
E.T. was funded by the Volkswagen Stiftung (Volkswagen Foundation) under the funding initiative Life? - A fresh scientific approach to the basic principles of life (project ID: 93063).

\bibliographystyle{splncs04}
\bibliography{references}

\section{Supplementary material}\label{supplementary}

\begin{figure}[h]
\centering
\includegraphics[width=\textwidth]{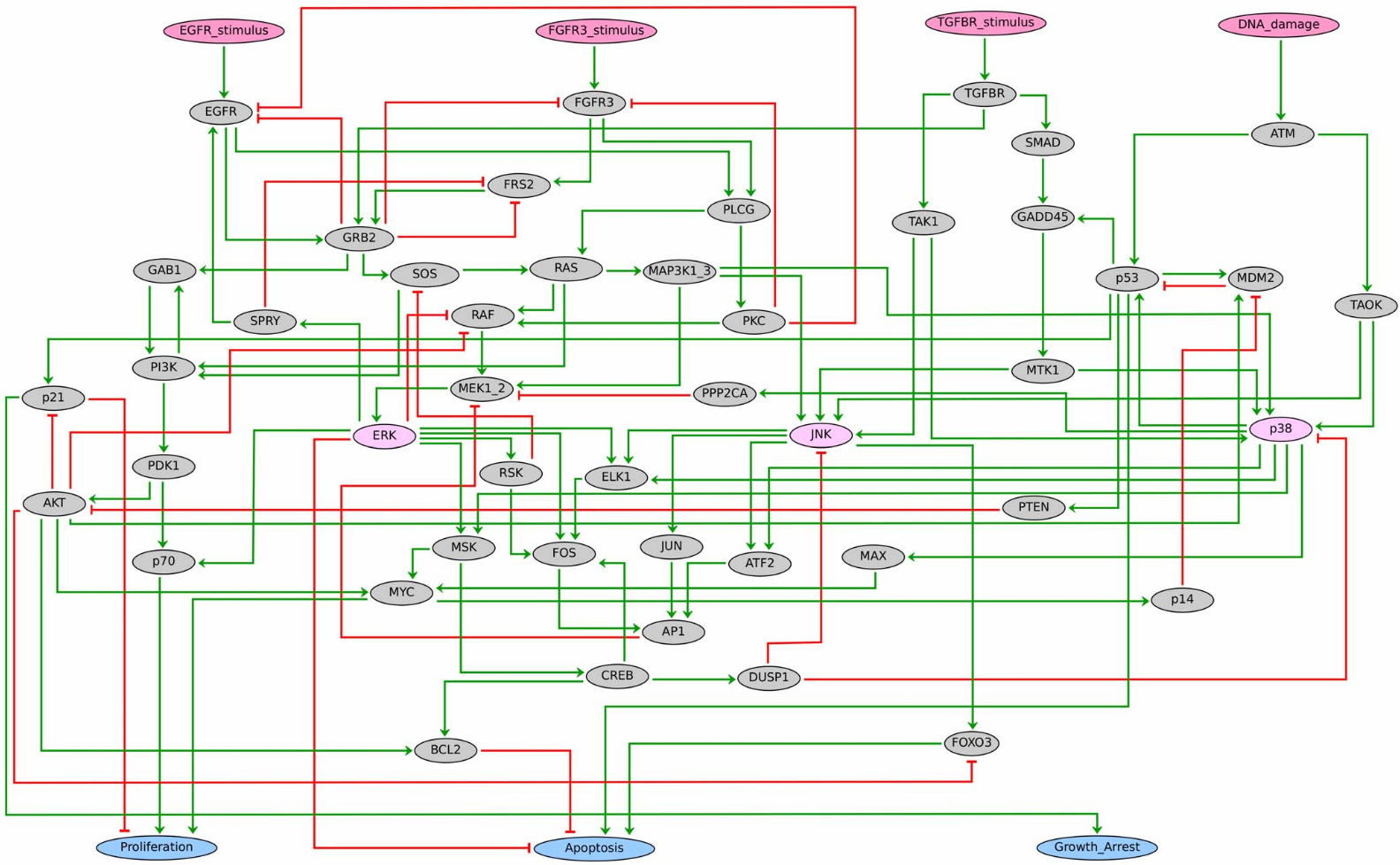}
\caption{MAPK network, figure adapted from \cite{MAPK_network}.}
\label{fig:mapk}
\end{figure}

\begin{table}
\caption{Apoptotic stable states of MAPK network (10). Each column represents a stable state. The number in the cell indicates the value of the variable in the stable state.}
\label{tab:stablestates}
\centering
\scriptsize
\begin{tabular}{|m{2.3cm}|>{\centering\arraybackslash}m{0.5cm}|>{\centering\arraybackslash}m{0.5cm}|>{\centering\arraybackslash}m{0.5cm}|>{\centering\arraybackslash}m{0.5cm}|>{\centering\arraybackslash}m{0.5cm}|>{\centering\arraybackslash}m{0.5cm}|>{\centering\arraybackslash}m{0.5cm}|>{\centering\arraybackslash}m{0.5cm}|>{\centering\arraybackslash}m{0.5cm}|>{\centering\arraybackslash}m{0.5cm}|}
\hline
 & $A_1$ & $A_2$ & $A_3$ & $A_4$ & $A_5$ & $A_6$ & $A_7$ & $A_8$ & $A_9$ & $A_{10}$ \\
\hline 
AKT & 0 & 0 & 0 & 0 & 0 & 0 & 0 & 0 & 0 & 0 \\
\hline 
AP1 & 1 & 1 & 1 & 1 & 1 & 1 & 1 & 1 & 1 & 1 \\
\hline 
ATF2 & 1 & 1 & 1 & 1 & 1 & 1 & 1 & 1 & 1 & 1 \\
\hline 
ATM & 0 & 1 & 0 & 1 & 0 & 0 & 1 & 1 & 1 & 1 \\
\hline 
Apoptosis & 1 & 1 & 1 & 1 & 1 & 1 & 1 & 1 & 1 & 1 \\
\hline 
BCL2 & 0 & 0 & 0 & 0 & 0 & 0 & 0 & 0 & 0 & 0 \\
\hline 
CREB & 1 & 1 & 1 & 1 & 1 & 1 & 1 & 1 & 1 & 1 \\
\hline 
DNA-damage & 0 & 1 & 0 & 1 & 0 & 0 & 1 & 1 & 1 & 1 \\
\hline 
DUSP1 & 1 & 1 & 1 & 1 & 1 & 1 & 1 & 1 & 1 & 1 \\
\hline 
EGFR & 0 & 0 & 0 & 0 & 0 & 0 & 0 & 0 & 0 & 0 \\
\hline 
EGFR-stimulus & 1 & 1 & 0 & 0 & 1 & 0 & 1 & 0 & 0 & 0 \\
\hline 
ELK1 & 1 & 1 & 1 & 1 & 1 & 1 & 1 & 1 & 1 & 1 \\
\hline 
ERK & 0 & 0 & 0 & 0 & 0 & 0 & 0 & 0 & 0 & 0 \\
\hline 
FGFR3 & 0 & 0 & 0 & 0 & 0 & 0 & 0 & 0 & 0 & 0 \\
\hline 
FGFR3-stimulus & 1 & 1 & 1 & 1 & 0 & 0 & 0 & 0 & 0 & 0 \\
\hline 
FOS & 0 & 0 & 0 & 0 & 0 & 0 & 0 & 0 & 0 & 0 \\
\hline 
FOXO3 & 1 & 1 & 1 & 1 & 1 & 1 & 1 & 1 & 1 & 1 \\
\hline 
FRS2 & 0 & 0 & 0 & 0 & 0 & 0 & 0 & 0 & 0 & 0 \\
\hline 
GAB1 & 1 & 1 & 1 & 1 & 1 & 1 & 1 & 1 & 1 & 0 \\
\hline 
GADD45 & 1 & 1 & 1 & 1 & 1 & 1 & 1 & 1 & 1 & 1 \\
\hline 
GRB2 & 1 & 1 & 1 & 1 & 1 & 1 & 1 & 1 & 0 & 0 \\
\hline 
Growth-Arrest & 1 & 1 & 1 & 1 & 1 & 1 & 1 & 1 & 1 & 1 \\
\hline 
JNK & 1 & 1 & 1 & 1 & 1 & 1 & 1 & 1 & 1 & 1 \\
\hline 
JUN & 1 & 1 & 1 & 1 & 1 & 1 & 1 & 1 & 1 & 1 \\
\hline 
MAP3K1-3 & 1 & 1 & 1 & 1 & 1 & 1 & 1 & 1 & 0 & 0 \\
\hline 
MAX1 & 1 & 1 & 1 & 1 & 1 & 1 & 1 & 1 & 1 & 1 \\
\hline 
MDM2 & 0 & 0 & 0 & 0 & 0 & 0 & 0 & 0 & 0 & 0 \\
\hline 
MEK1-2 & 0 & 0 & 0 & 0 & 0 & 0 & 0 & 0 & 0 & 0 \\
\hline 
MSK & 1 & 1 & 1 & 1 & 1 & 1 & 1 & 1 & 1 & 1 \\
\hline 
MTK1 & 1 & 1 & 1 & 1 & 1 & 1 & 1 & 1 & 1 & 1 \\
\hline 
MYC & 1 & 1 & 1 & 1 & 1 & 1 & 1 & 1 & 1 & 1 \\
\hline 
PDK1 & 1 & 1 & 1 & 1 & 1 & 1 & 1 & 1 & 1 & 0 \\
\hline 
PI3K & 1 & 1 & 1 & 1 & 1 & 1 & 1 & 1 & 1 & 0 \\
\hline 
PKC & 0 & 0 & 0 & 0 & 0 & 0 & 0 & 0 & 0 & 0 \\
\hline 
PLCG & 0 & 0 & 0 & 0 & 0 & 0 & 0 & 0 & 0 & 0 \\
\hline 
PPP2CA & 1 & 1 & 1 & 1 & 1 & 1 & 1 & 1 & 1 & 1 \\
\hline 
PTEN & 1 & 1 & 1 & 1 & 1 & 1 & 1 & 1 & 1 & 1 \\
\hline 
Proliferation & 0 & 0 & 0 & 0 & 0 & 0 & 0 & 0 & 0 & 0 \\
\hline 
RAF & 1 & 1 & 1 & 1 & 1 & 1 & 1 & 1 & 0 & 0 \\
\hline 
RAS & 1 & 1 & 1 & 1 & 1 & 1 & 1 & 1 & 0 & 0 \\
\hline 
RSK & 0 & 0 & 0 & 0 & 0 & 0 & 0 & 0 & 0 & 0 \\
\hline 
SMAD & 1 & 1 & 1 & 1 & 1 & 1 & 1 & 1 & 0 & 0 \\
\hline 
SOS & 1 & 1 & 1 & 1 & 1 & 1 & 1 & 1 & 0 & 0 \\
\hline 
SPRY & 0 & 0 & 0 & 0 & 0 & 0 & 0 & 0 & 0 & 0 \\
\hline 
TAK1 & 1 & 1 & 1 & 1 & 1 & 1 & 1 & 1 & 0 & 0 \\
\hline 
TAOK & 0 & 1 & 0 & 1 & 0 & 0 & 1 & 1 & 1 & 1 \\
\hline 
TGFBR & 1 & 1 & 1 & 1 & 1 & 1 & 1 & 1 & 0 & 0 \\
\hline 
TGFBR-stimulus & 1 & 1 & 1 & 1 & 1 & 1 & 1 & 1 & 0 & 0 \\
\hline 
p14 & 1 & 1 & 1 & 1 & 1 & 1 & 1 & 1 & 1 & 1 \\
\hline 
p21 & 1 & 1 & 1 & 1 & 1 & 1 & 1 & 1 & 1 & 1 \\
\hline 
p38 & 1 & 1 & 1 & 1 & 1 & 1 & 1 & 1 & 1 & 1 \\
\hline 
p53 & 1 & 1 & 1 & 1 & 1 & 1 & 1 & 1 & 1 & 1 \\
\hline 
p70 & 0 & 0 & 0 & 0 & 0 & 0 & 0 & 0 & 0 & 0 \\
\hline
\end{tabular}
\end{table}

\begin{table}
\caption{Control strategies up to size 5 obtained for the apoptotic stable states of the MAPK network. Each column represents a control strategy for the indicated stable state. A number in a cell indicates the value to which the variable is fixed in the control strategy. Empty cells denote uncontrolled components.}
\label{tab:cs}
\centering
\begin{tabular}{|m{2.9cm}|>{\centering\arraybackslash}m{0.5cm}|>{\centering\arraybackslash}m{0.5cm}|>{\centering\arraybackslash}m{0.5cm}|>{\centering\arraybackslash}m{0.5cm}|>{\centering\arraybackslash}m{0.5cm}|>{\centering\arraybackslash}m{0.5cm}|>{\centering\arraybackslash}m{0.5cm}|>{\centering\arraybackslash}m{0.5cm}|>{\centering\arraybackslash}m{0.5cm}|>{\centering\arraybackslash}m{0.5cm}|}
\hline
 & $A_1$ & $A_2$ & $A_3$ & $A_4$ & $A_5$ & $A_6$ & $A_7$ & $A_8$ & \multicolumn{2}{c|}{$A_9$} \\
\hline 
DNA-damage & 0 & 1 & 0 & 1 & 0 & 0 & 1 & 1 & 1 & 1 \\
\hline 
EGFR-stimulus & 1 & 1 & 0 & 0 & 1 & 0 & 1 & 0 & 0 & 0 \\
\hline 
FGFR3-stimulus & 1 & 1 & 1 & 1 & 0 & 0 & 0 & 0 & 0 & 0 \\
\hline 
TGFBR-stimulus & 1 & 1 & 1 & 1 & 1 & 1 & 1 & 1 & 0 & 0 \\
\hline 
GAB1 &  &  &  &  &  &  &  &  & 1 &  \\
\hline 
PI3K &  &  &  &  &  &  &  &  &  & 1 \\
\hline
\end{tabular}
\end{table}

\begin{figure}
\centering
\includegraphics[width=\textwidth]{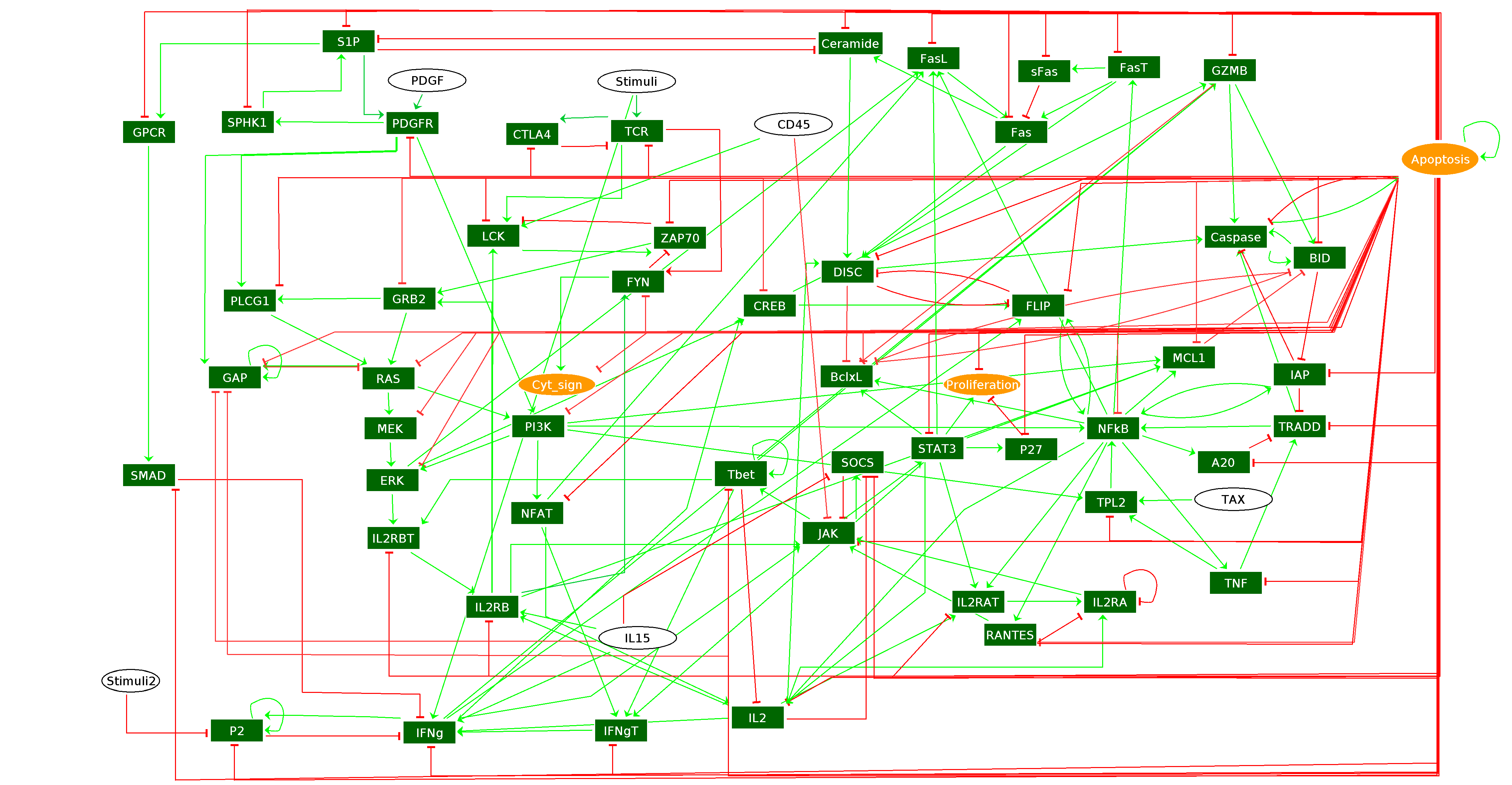}
\caption{T-LGL network, figure obtained using GINsim software \cite{GINsim}.}
\label{fig:tlgl}
\end{figure}

\end{document}